\newtheorem{thm}{Theorem}
\newtheorem{lem}[thm]{Lemma}
\newtheorem{rem}[thm]{Remark}
\newtheorem{exmp}{Example}
\title{An {eigen decomposition based} closed-form solution for the Discrete Lyapunov and Stein Equations}
\author{Aaqib Patel$^{1}$, Member, IEEE and Mohammed Zafar Ali Khan$^{2}$, Senior Member, IEEE
\thanks{The authors are with the Department of Electrical Engineering,
        Indian Institute of Technology Hyderabad, Kandi, Telangana, India 502285
        (e-mail: {\tt\small $^{1}$aaqib@iith.ac.in}; {\tt\small$^{2}$zafar@ee.iith.ac.in})}%
}
\begin{document}

\maketitle
\thispagestyle{empty}
\pagestyle{empty}

\begin{abstract}
A simple closed-form solution to the discrete Lyapunov equation
(DLE) is established {for certain families of matrices}. This solution is expressed in terms of the eigen
decomposition {(ED) for which closed-form solutions are known for all $2\times2, 3\times3$ and certain families of matrices. For general matrices, the proposed ED based closed-form solution  can be used as an efficient numerical solution when the ED can be computed.} The result is then extended to give closed-form solutions for a generalization of the DLE, called the Stein equation. The proposed explicit solution's complexity is of the same order as iterative solutions and significantly smaller than known closed-form solutions. These solutions may prove convenient for analysis and synthesis problems related to these equations due to their compact form.
\end{abstract}

\begin{keywords}
Control over communication, Estimation, Kalman filtering, networked control systems, quantized systems, sensor networks, linear systems, stochastic systems.
\end{keywords}

\section{INTRODUCTION}\label{sec:intro}
The continuous and discrete Lyapunov matrix equations $AX-XA^T=Q$ and $X-AXA^T=Q$ are encountered in system stability analysis. Like the Bartels-Stewart algorithm \cite{2,3,4} and the Hessenberg-Schur algorithm, several numerical approaches were proposed \cite{5}. Nevertheless, closed-form solutions have their importance. In \cite{6}, solutions for the two Lyapunov equations were proposed when matrix $A$ is in companion form. For an arbitrary $A$, the integral forms \cite{7} and the matrix power form \cite{8} were proposed. Sensitivity of the Lyapunov equations was studied in \cite{9,10,11}.

A generalization of the two Lyapunov matrix equations $XF-AX=C$, called the Sylvester equation,  and $X-AXF=C$, called the Stein equation, play a vital role in the investigation of Yule-Walker type equations \cite{12,13}. Explicit solution of $XF-AX=C$, when the matrices $A$ and $F$ are both in Jordan forms, was provided in terms of finite double matrix series \cite{14}. When the matrix $F$ is in companion form, explicit solutions were given in terms of the controllability matrix and observability matrix \cite{15}. When the matrix $F$ is arbitrary, an explicit solution was established in \cite{16}. In \cite{17,18,19,20}, solutions for $XF-AX=C$ and $X-AXF=C$ are given according to the coefficients of the characteristic polynomial of matrix $A$, the so-called Faddev algorithm. Recent work on Sylvester equation and its generalizations can be found in \cite{22,23,24,25,26,27,28,38,39} and the references therein. A generalized version of the Stein matrix equations $XF-A\bar{X}=C$, $X-A\bar{X}F=C$ where $\bar{X}$ denotes the conjugate of $X$,  was investigated in \cite{29,37} and the references therein.

In this letter a we first derive a closed form solution of the DLE which is based on the eigen decomposition of $A$ and require $A$ to be non-defective. 
The results are then generalized to the Stein equation. As a side result, we also present a novel closed form solution for a particular matrix series that, to the best of our knowledge, was not known previously.



The rest of the paper is organized as follows. In Section \ref{sec:dle}, a closed-form solution of the discrete Lyapunov equation
(DLE) is  presented and in Section \ref{sec:stein}, the results are extended to the Stein equation. Numerical examples are presented in Section \ref{sec:num} and Section \ref{sec:conc} concludes the paper.
%
%
%
%
%
\section{Closed-form Solution of the DLE}\label{sec:dle}
Consider the discrete Lyapunov equation (DLE) given by
\begin{equation}\label{eq1}
  X=AXA^H+Q,
\end{equation}
where $A >0 \in \mathbb{C}^{n \times n}$ and $Q\in \mathbb{C}^{n \times n}$. Define  $\mathrm{vec}(A)$ as the stacking of the columns of the  matrix $A$ and $A \otimes B$ as the Kronecker product of the matrices $A$ and $B$. Using the property $\mathrm{vec}(ABC)=(C^T \otimes A)\mathrm{vec}(B)$, \eqref{eq1} can be written as
\begin{align}\label{eq3}
 & \mathrm{vec}(X)=(A \otimes A)\mathrm{vec}(X)+\mathrm{vec}(Q)\nonumber\\
\end{align}
Rearranging, we have the system of linear equations
\begin{align}\label{eq3a}
& \Rightarrow (I_{n^2}-(\bar{A} \otimes A))  \mathrm{vec}(X)=\mathrm{vec}(Q).
\end{align}
\begin{rem}
A closed-form solution of the discrete Lyapunov equation (DLE) can be obtained from \eqref{eq3}. However, this requires solving a system of linear equations of size $n^2$. The complexity of the best known solutions for a matrix of size $n^2$ is $O(n^{4.75})$ \cite{36}, where $O()$ denotes the order of. Practical applications would be of the order of $O(n^6)$. 
Similarly, the other known closed form solutions are of higher complexity as matrices of dimesnion $n^2$ are involved. In what follows we develop an exact solution of the DLE with  complexity $O(n^3)$.
\end{rem}
We first simplify \eqref{eq1} to a simpler problem when $A$ is non-defective. Denoting $S^{-H}= \left( S^{H}\right)^{-1}$, we have
\begin{lem}\label{lem1}
 The solution of \eqref{eq1} is obtained by solving
 \begin{equation}\label{eq4}
 Y=\Sigma Y\bar{\Sigma}+\hat{Q}
,\end{equation}
 where $A=S^{-1}\Sigma S$ is the eigen decomposition of $A$, $Y=S X S^H$ and $\hat{Q}=S Q S^H$.
\end{lem}
\begin{proof}
Let $S^{-1}\Sigma S= A$ be the eigen-decomposition of $A$, then substituting in \eqref{eq1}, we have
\begin{equation}\label{eq5}
X-S^{-1}\Sigma S X S^{H}\bar{\Sigma }S^{-H}=Q.
\end{equation}
Appropriately multiplying by $S$ and $S^{H}$, we have
\begin{eqnarray}\label{eq6}
 &&S X S^{H}-\Sigma S X S^{H}\bar{\Sigma}=S Q S^{H}. \nonumber
 \end{eqnarray}
Defining $Y=S X S^H$ and $\hat{Q}=S Q S^H$ and substituting in \eqref{eq6}, we have \eqref{eq4}.
If $Y_o$ is a solution of \eqref{eq4}, then the solution of \eqref{eq1} is given by $X_o= S^{-1} Y_o S^{-H}$ and vice-versa completing the proof.
\end{proof}
Equation \eqref{eq4} allows for closed form solution of  $X$, as is shown in the next theorem. Denoting the conjugate of $a$ as  $a^*$, we have
\begin{thm}\label{thm2}
The solution of \eqref{eq1}, denoted as $X_o$, is given by
\begin{equation}\label{eq7}
X_{o} = S^{-1} \left( SQS^H \circ M\right) S^{-H},
\end{equation}
 where $\circ$ denotes the Hadamard product, 
 {\small \begin{equation}\label{eqM}
 M=\begin{bmatrix}
              (1-|\sigma_1|^2)^{-1} & (1-\sigma_1 \sigma_2^*)^{-1} & \cdots & (1-\sigma_1\sigma_n^*)^{-1} \\
             (1-\sigma_2 \sigma_1^*)^{-1} & (1-|\sigma_2|^2)^{-1} & \cdots & (1-\sigma_2 \sigma_n^*)^{-1} \\
              \vdots & \ddots & \ddots & \vdots \\
              (1-\sigma_n \sigma_1^*)^{-1} & (1-\sigma_n \sigma_2)^{-1} & \cdots & (1-|\sigma_n|^2)^{-1} \\
            \end{bmatrix},
            \end{equation}
}
 $A=S^{-1}\Sigma S$ is  the eigen decomposition of matrix $A$ and $\sigma_i$ is the $i$-th eigenvalue  of the matrix $A$.
 \end{thm}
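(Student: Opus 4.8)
The plan is to build directly on Lemma~\ref{lem1}, which already reduces \eqref{eq1} to the transformed equation \eqref{eq4}, namely $Y=\Sigma Y\bar{\Sigma}+\hat{Q}$ with $\hat{Q}=SQS^H$. The crucial observation is that $\Sigma=\mathrm{diag}(\sigma_1,\dots,\sigma_n)$ is diagonal, so the matrix equation \eqref{eq4} \emph{decouples} completely into $n^2$ independent scalar equations, one for each entry $Y_{ij}$. This decoupling is the whole point of passing to the eigen decomposition, and it is available precisely because $A$ is non-defective.

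First I would compute the $(i,j)$ entry of $\Sigma Y\bar{\Sigma}$. Since $\Sigma$ and $\bar{\Sigma}$ are diagonal, only the diagonal factors survive and $(\Sigma Y\bar{\Sigma})_{ij}=\sigma_i\,\sigma_j^*\,Y_{ij}$. Substituting into \eqref{eq4} gives the scalar relation $Y_{ij}=\sigma_i\sigma_j^*Y_{ij}+\hat{Q}_{ij}$, whence $(1-\sigma_i\sigma_j^*)Y_{ij}=\hat{Q}_{ij}$ and therefore
\[
Y_{ij}=\frac{\hat{Q}_{ij}}{1-\sigma_i\sigma_j^*}.
\]
Reading off the definition of $M$ in \eqref{eqM}, the entry $M_{ij}=(1-\sigma_i\sigma_j^*)^{-1}$ is exactly the reciprocal appearing here, so the full collection of scalar solutions is precisely the Hadamard product $Y_o=\hat{Q}\circ M=(SQS^H)\circ M$. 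I would then transform back to the original variable: Lemma~\ref{lem1} guarantees that if $Y_o$ solves \eqref{eq4} then $X_o=S^{-1}Y_oS^{-H}$ solves \eqref{eq1}, and substituting the expression just found yields $X_o=S^{-1}\big((SQS^H)\circ M\big)S^{-H}$, which is exactly \eqref{eq7}.

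The main obstacle I anticipate is not the algebra, which is routine, but the \emph{well-posedness} question hidden in the step that divides by $1-\sigma_i\sigma_j^*$: the formula is meaningful only when $\sigma_i\sigma_j^*\neq 1$ for every pair $(i,j)$, equivalently when $M$ has no infinite entries. I would handle this by connecting it back to the linear system \eqref{eq3a}: the eigenvalues of $I_{n^2}-\bar{A}\otimes A$ are exactly the numbers $1-\sigma_i^*\sigma_j$ over all pairs, which form the same set as $\{1-\sigma_i\sigma_j^*\}$. Hence invertibility of $I_{n^2}-\bar{A}\otimes A$, the non-vanishing of all entries of $M$, and the existence and uniqueness of $X_o$ are one and the same condition, and it is guaranteed by the stability hypothesis on $A$. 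Thus the only genuinely substantive point to make explicit is that the diagonality of $\Sigma$ (non-defectiveness of $A$) is what makes the entrywise decoupling legitimate, while the spectral condition $\sigma_i\sigma_j^*\neq 1$ is what makes the resulting scalar solutions finite.
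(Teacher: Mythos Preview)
Your argument is correct, and it reaches \eqref{eq7} by a genuinely shorter route than the paper does. The paper proceeds column by column: it extracts the $i$-th column of \eqref{eq4}, solves $Y_i=(I_n-\sigma_i^*\Sigma)^{-1}\hat{Q}_i$, then expands that inverse as a geometric series $\sum_{l\ge 0}\Sigma^l(\sigma_i^*)^l$, reassembles the columns into the matrix series $Y=\sum_{l\ge 0}\Sigma^l SQS^H\bar{\Sigma}^l$, and only then reads off the $(i,j)$ entry to recognize the Hadamard product. Your direct entrywise decoupling skips the series detour entirely and therefore needs only the algebraic condition $\sigma_i\sigma_j^*\neq 1$, whereas the paper's Taylor expansion tacitly imposes the stronger convergence hypothesis $|\sigma_i\sigma_j^*|<1$. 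What the paper's longer path buys is the intermediate identity \eqref{eq12}, which is then recycled to recover the classical series solution $X=\sum_{l\ge 0}A^lQ(A^l)^H$ and to state Theorem~\ref{thm3}; your approach proves Theorem~\ref{thm2} more cleanly but does not produce that by-product.
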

 \begin{proof}
Taking the $i$-th column  on both sides of  \eqref{eq4} and using matrix manipulations, we have
\vspace{-0.13in}
  \begin{eqnarray}\label{eq9}
 &&Y_i- \Sigma Y_i\sigma_i^*=\hat{Q}_i, \nonumber\\
&& \Rightarrow \left(I_n-\sigma_i^* \Sigma\right)Y_i=\hat{Q}_i\nonumber\\
&& \Rightarrow Y_i=\left(I_n-\sigma_i^* \Sigma\right)^{-1}\hat{Q}_i.
\end{eqnarray}
The diagonal matrix $\left(I-\sigma_i^* \Sigma \right)^{-1}$ can be simplified using Taylor series as follows
\begin{align*}
& \left(I-\sigma_i^* \Sigma \right)^{-1}=\hspace{3in}
 \end{align*}
\begin{align}\label{eq10}
 ~~ & \begin{bmatrix}
                                            (1-\sigma_i^*\sigma_1)^{-1} & 0& \cdots & 0 \\
                                            0 & (1-\sigma_i^*\sigma_2)^{-1} & \cdots & 0 \\
                                            \vdots & \ddots & \ddots & \vdots \\
                                            0 & \cdots & 0 & (1-\sigma_i^*\sigma_n)^{-1} \\
                                          \end{bmatrix}
\nonumber \\
                                      & = \begin{bmatrix}
                                            \sum_{l=0}^{\infty} [\sigma_i^*\sigma_1]^l & 0& \cdots & 0 \\
                                            0 & \sum_{l=0}^{\infty} [\sigma_i^*\sigma_2]^l & \cdots & 0 \\
                                            \vdots & \ddots & \ddots & \vdots \\
                                            0 & \cdots & 0 & \sum_{l=0}^{\infty} [\sigma_i^*\sigma_n]^l\\
                                          \end{bmatrix}
                                          \nonumber\\
                                          & = \sum_{l=0}^{\infty}\begin{bmatrix}
                                            [\sigma_i^*\sigma_1]^l & 0& \cdots & 0 \\
                                            0 &  [\sigma_i^*\sigma_2]^l & \cdots & 0 \\
                                            \vdots & \ddots & \ddots & \vdots \\
                                            0 & \cdots & 0 &  [\sigma_i^*\sigma_n]^l\\
                                          \end{bmatrix}
                                          \nonumber\\
                                          &
= \sum_{l=0}^{\infty} \Sigma^l \left(\sigma_i^*\right)^l.
\end{align}
Observe that we have assumed $|\sigma_i^*\sigma_j|  < 1, \forall  i,j$ while using the Taylor series expansion, which has been known previously \cite{6}.
Substituting \eqref{eq10} in \eqref{eq9}, we have
  \begin{eqnarray}\label{eq11}
Y_i&=&  \sum_{l=0}^{\infty} \Sigma^l \left(\sigma_i^*\right)^l \hat{Q}_i
.
\end{eqnarray}
Collecting the columns in a matrix, we have
\begin{align}\label{eq12}
Y&=
\left[\sum_{l=0}^{\infty} \Sigma^l \left(\sigma_1^*\right)^l \hat{Q}_1~~\sum_{l=0}^{\infty} \Sigma^l \left(\sigma_2^*\right)^l \hat{Q}_2~\cdots ~ \Sigma^l \left(\sigma_n^*\right)^l \hat{Q}_n \right],\nonumber\\
&=\sum_{l=0}^{\infty}\Sigma^l\left[\left(\sigma_1^*\right)^lSQ{S}^H_1~~\left(\sigma_2^*\right)^l SQ{S}^H_2\cdots \left(\sigma_n^*\right)^lSQ{S}^H_n \right],\nonumber\\
&=\sum_{l=0}^{\infty}\Sigma^lSQ\left[\left(\sigma_1^*\right)^l{S}^H_1~~\left(\sigma_2^*\right)^l {S}^H_2\cdots\left(\sigma_n^*\right)^l{S}^H_n \right]\nonumber\\
&
=\sum_{l=0}^{\infty} \Sigma^l SQS^H \bar{\Sigma}^l,
\end{align}
where $S_i^H,$ denotes the $i$-th column of $S^H$. Taking the entry in the $i$-row, $j$-th column of $Y$ in \eqref{eq12}, we have
\begin{align}\label{eq13}
Y_{ij}&=\sum_{l=0}^{\infty} \sigma_i^l (S_i^H)^HQS_j^H \left(\sigma_j^*\right)^l\nonumber\\&
= (S_i^H)^HQS_j^H\sum_{l=0}^{\infty} \sigma_i^l \left(\sigma_j^*\right)^l,\nonumber\\
&= (S_i^H)^HQS_j^H(1-\sigma_i \sigma_j^*)^{-1}.
\end{align}
Observe that $(S_i^H)^H$ is the $i$-th row of $S$. Using the definition of the matrix $M$ as defined in \eqref{eqM},
we have 
\begin{align}\label{eq14}
Y&= SQS^H \circ M,
\end{align}
where $A \circ B$ denotes the Hadamard product of matrices $A,B$. Using the relation between $X$ and $Y$, we have \eqref{eq7}. 
 \end{proof}
 \begin{rem}
 The solution of DLE using \eqref{eq7} requires eigen decomposition of $A$. Closed form solutions for eigenvalues and eigenvectors of all 2x2 and 3x3 matrices are well known. The eigen values and eigen vectors are also known for  many families and classes of matrices, like diagonal, Toeplitz, Hankel, etc. So, the proposed solution is a 'closed form solution' only for certain matrices. For general matrices, the proposed solution can be treated as an ED based numerical method. However, the eigen decomposition is not generally stable for large matrices and the proposed method may be used only for simplification in design and analysis, where the final numerical solutions can be obtained using other stable algorithms. Also, stability criteria like those in \cite{40} can be used to identify suitability of the proposed algorithm.
 \end{rem}
 \begin{rem}
 For computing the solution of DLE using \eqref{eq7}, eigen decomposition of $A$ is needed which is of complexity $O(n^3)$. Computing $M$ is of complexity $2n^2$ and computing the Hadamard product is of complexity $n^2$. Computing the four matrix products is of complexity $O(4n^{2.376})$, so the overall complexity is $O(n^3+4n^{2.376}+3n^2)$.  Note that using the standard matrix multiplication, the complexity will be $O (5 n^{3}+2n^2)$. Observe  that calculating $X$ using \eqref{eq7} requires the eigen values of $A$, $|\sigma_i|  < 1$ which is true for all stable systems. Further, the derivation requires the matrix $A$ to be non-defective so that the eigen decomposition exists.
 \end{rem}
 \begin{rem}
  Equation \eqref{eq12} can be used to obtain the well known series solution of the DLE; which is the standard approach. The use of Hadamard product in \eqref{eq14} allowed us to propose the new ED based closed form solution, which to the best of our knowledge was not proposed earlier.  As $Y=S X S^T$, we have from \eqref{eq12}
   \begin{align}\label{eq16}
X&= S^{-1}YS^{-H},\nonumber\\
&
=\sum_{l=0}^{\infty} S^{-1}\Sigma^l SQS^H \Sigma^l S^{-H}\nonumber\\
&
=\sum_{l=0}^{\infty} A^l  Q  \left(A^l\right)^H.
\end{align}
This allows us to give a new closed-form solution for this matrix series in Theorem \ref{thm2}.
%
 \end{rem}
 \begin{thm}\label{thm3}
 The matrix series $\sum_{l=0}^{\infty} A^l  Q  \left(A^l\right)^H$ converges  iff the absolute value of the maximum eigen value of $A$ is less than 1 so that
   \begin{align}\label{eq16a}
S^{-1} \left( SQS^H \circ M\right) S^{-H}= \sum_{l=0}^{\infty} A^l  Q  \left(A^l\right)^H.
\end{align}
 \end{thm}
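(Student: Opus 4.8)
The plan is to treat the statement as two parts, the claimed identity in \eqref{eq16a} and the convergence characterization, and to observe that the identity is essentially already in hand from Theorem \ref{thm2}. Indeed, using $A=S^{-1}\Sigma S$ one has $A^l Q (A^l)^H = S^{-1}\Sigma^l (SQS^H)\bar{\Sigma}^l S^{-H}$, since $(A^l)^H = S^H \bar{\Sigma}^l S^{-H}$ for diagonal $\Sigma$. Because $S$ is invertible, pre- and post-multiplication by $S^{-1}$ and $S^{-H}$ neither creates nor destroys convergence, so the series $\sum_{l}A^l Q (A^l)^H$ converges iff the inner series $\sum_{l}\Sigma^l (SQS^H)\bar{\Sigma}^l$ converges, and their limits are related by exactly the conjugation appearing in \eqref{eq7}. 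Thus the whole theorem reduces to analysing the inner, diagonally weighted series.

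First I would pass to entries. The $(i,j)$ entry of $\Sigma^l (SQS^H)\bar{\Sigma}^l$ is $(SQS^H)_{ij}(\sigma_i\sigma_j^*)^l$, so the $(i,j)$ entry of the series is the scalar geometric series $(SQS^H)_{ij}\sum_{l=0}^{\infty}(\sigma_i\sigma_j^*)^l$, which converges iff $|\sigma_i\sigma_j^*|=|\sigma_i||\sigma_j|<1$. For the sufficiency direction, assume the maximum modulus eigenvalue satisfies $\rho(A)<1$; then $|\sigma_i||\sigma_j|<1$ for every pair $(i,j)$, every entrywise geometric series converges to $(1-\sigma_i\sigma_j^*)^{-1}$, and collecting the entries reproduces the matrix $M$ of \eqref{eqM}. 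This re-derives \eqref{eq14} as a genuine (not merely formal) limit and, after conjugation, establishes the identity \eqref{eq16a}; it also recovers \eqref{eq16} as the convergent value of the classical series.

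For necessity I would argue by contraposition. Suppose $\rho(A)\ge 1$ and let $k$ be an index with $|\sigma_k|\ge 1$. The diagonal $(k,k)$ entry of the inner series is $(SQS^H)_{kk}\sum_{l}|\sigma_k|^{2l}$, and $\sum_l|\sigma_k|^{2l}$ diverges. Writing $(SQS^H)_{kk}=r_k Q r_k^H$ with $r_k$ the $k$-th row of $S$ (nonzero because $S$ is invertible), this diagonal entry diverges provided $r_k Q r_k^H\neq 0$; for Hermitian positive-definite $Q$ this quantity is strictly positive, so the inner series, and hence the original series, diverges.

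The main obstacle is precisely this necessity direction, because the biconditional is sensitive to $Q$: for degenerate data such as $Q=0$ the series converges for every $A$, so $\rho(A)<1$ cannot be forced. I would therefore make explicit the hypothesis on $Q$ (positive definiteness, under which $r_k Q r_k^H>0$ for all $k$) that keeps the relevant diagonal coefficient nonzero. A cleaner, $Q$-independent phrasing, useful as a cross-check, is the vectorised Neumann series $\mathrm{vec}(X)=\sum_{l}(\bar{A}\otimes A)^l\,\mathrm{vec}(Q)$ coming from \eqref{eq3a}: its convergence for all $Q$ is governed by $\rho(\bar{A}\otimes A)=\rho(A)^2$, which is less than $1$ exactly when $\rho(A)<1$, simultaneously giving convergence and identifying the limit as $(I_{n^2}-\bar{A}\otimes A)^{-1}\mathrm{vec}(Q)$.
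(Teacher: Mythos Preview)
Your approach is essentially the paper's own: the paper does not give Theorem~\ref{thm3} a separate proof but treats it as an immediate by-product of the calculations \eqref{eq12}--\eqref{eq14} in the proof of Theorem~\ref{thm2} together with the conjugation in \eqref{eq16}, which is exactly your route of diagonalising, reading the inner series entrywise as $(SQS^H)_{ij}\sum_l(\sigma_i\sigma_j^*)^l$, summing the geometric series to recover $M$, and conjugating back. Your treatment of the necessity direction (and the observation that the ``only if'' genuinely needs a nondegeneracy hypothesis on $Q$, e.g.\ positive definiteness, or else the ``for all $Q$'' Kronecker/Neumann reformulation) is in fact more careful than anything the paper supplies, since the paper simply invokes $|\sigma_i\sigma_j^*|<1$ by reference and never argues the converse.
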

 \begin{rem}
  The  result of Theorem \ref{thm2} holds for symmetric or Hermitian $A$, wherein the singular value decomposition (SVD) of  $A=O\Sigma O^H$, where $O$ is a unitary matrix and $\Sigma$ is diagonal, can be used. 
 \end{rem}
\section{Closed-form Solution of the Stein Matrix Equation}\label{sec:stein}
Consider the Stein matrix equation \cite{19}
\begin{equation}\label{eq28}
  X -AXF= Q
\end{equation}
where $A \in \mathbb{C}^{n \times n}, Q \in \mathbb{C}^{n \times p},$ and $F \in \mathbb{C}^{p \times p}$ are the given matrices and $X \in \mathbb{C}^{n \times p}$ is the matrix to be determined. 
We have
\begin{lem}\label{lem7}
 The solution of \eqref{eq28} is obtained by solving
 \begin{equation}\label{eq29}
 Y=\Theta Y \Lambda + \tilde{Q}
,\end{equation}
 where $A=S^{-1}\Theta S$, $F=V^{-1}\Lambda V$ are the eigen-decompositions of the matrices $A$, $F$ respectively and $\hat{Q}= S Q V^{-1}$.
\end{lem}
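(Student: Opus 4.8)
The plan is to follow the same diagonalizing-substitution strategy used in the proof of Lemma~\ref{lem1}, the only new feature being that the Stein equation involves two distinct matrices $A$ and $F$, so two separate eigen-decompositions (and hence two distinct similarity transforms $S$ and $V$) must be introduced, one acting on the left and one on the right.

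First I would substitute the eigen-decompositions $A = S^{-1}\Theta S$ and $F = V^{-1}\Lambda V$ into \eqref{eq28}, giving
\begin{equation*}
X - S^{-1}\Theta S\,X\,V^{-1}\Lambda V = Q.
\end{equation*}
Next, I would pre-multiply by $S$ and post-multiply by $V^{-1}$. Pre-multiplication cancels the leading $S^{-1}$ on the $AXF$ term and turns $X$ into $SX$; post-multiplication cancels the trailing $V$ and turns $SX$ into $SXV^{-1}$. Collecting terms yields
\begin{equation*}
SXV^{-1} - \Theta\,(SXV^{-1})\,\Lambda = SQV^{-1}.
\end{equation*}
Defining $Y = SXV^{-1}$ and $\tilde{Q} = SQV^{-1}$ then reads off \eqref{eq29} directly.

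Finally, to establish the claimed equivalence in both directions, I would invert the change of variables: since $Y = SXV^{-1}$ is the same as $X = S^{-1}YV$, any solution $Y_o$ of \eqref{eq29} corresponds to $X_o = S^{-1}Y_o V$ solving \eqref{eq28}, and conversely, which closes the argument exactly as in Lemma~\ref{lem1}.

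I do not expect a genuine obstacle here, since the derivation is a routine matrix manipulation. The one point requiring care is the bookkeeping of the two transforms: because $X \in \mathbb{C}^{n\times p}$ is rectangular, $S$ is $n\times n$ and $V$ is $p\times p$, and one must confirm that the pre-multiplier is $S$ (the transform of $A$) while the post-multiplier is $V^{-1}$ (built from the transform of $F$), so that every product is conformable and the reduced equation \eqref{eq29} inherits the correct $n\times p$ shape for $Y$ and $\tilde{Q}$. A minor notational remark is that \eqref{eq29} uses $\tilde{Q}$ whereas the statement writes $\hat{Q}$; both denote the same matrix $SQV^{-1}$.
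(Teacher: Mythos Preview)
Your proposal is correct and follows essentially the same route as the paper: substitute the two eigen-decompositions into \eqref{eq28}, pre-multiply by $S$ and post-multiply by $V^{-1}$, define $Y=SXV^{-1}$ and $\hat Q=SQV^{-1}$, and note the bijection $X_o=S^{-1}Y_oV$. Your remarks on conformability and on the $\tilde Q$ versus $\hat Q$ notation are appropriate; the paper itself has a minor $U$/$S$ inconsistency in its proof that you have avoided.
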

\begin{proof}
Let $U^{-1}\Theta U= A$ be the eigen-decomposition of $A$ and $V^{-1}\Lambda V=F$ be the eigen-decomposition of $F$, then substituting in \eqref{eq28}, we have
\begin{equation}\label{eq32}
X-S^{-1}\Theta S X V^{-1}\Lambda V =\hat{Q}.
\end{equation}
Pre-multiplying by $S$, and post multiplying by $V^{-1}$, we have
\[
S X V^{-1}-\Theta S X V^{-1}\Lambda =S{Q}V^{-1}.
\]
Substituting $Y=S X V^{-1}$ and $\hat{Q}=S{Q}V^{-1}$, we have \eqref{eq29}.
 If $Y_o$ is a solution of \eqref{eq29}, then the solution of \eqref{eq28} is given by $X_o= U^{-1} Y_o V$ and vice-versa completing the proof.
\end{proof}
Similar to Theorem \ref{thm2} we have Theorem \ref{thm8}, and hence the proof has been omitted.
\begin{thm}\label{thm8}
Let $X_o$ be the solution of \eqref{eq28}, then
\begin{equation}\label{eq34}
X_{o} = U^{-1} \left( UQV^{-1} \circ N\right) V,
\end{equation}
 where the  entry in $i$-th row, $j$-th column of the matrix $N \in \mathbb{C}^{n \times p}$ is defined as
$  N_{ij}=(1-\theta_i \lambda_j)^{-1}$,  $\lambda_i$ is the $i$-th eigenvalue  of the matrix $F$ with eigen decomposition $F=V^{-1} \Lambda V$ and $\theta_j$ is the $j$-th eigenvalue  of the matrix $A = U^{-1}\Theta U$.
 \end{thm}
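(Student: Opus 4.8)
The plan is to mirror the argument of Theorem \ref{thm2} verbatim, replacing the single spectrum $\{\sigma_i\}$ of the DLE by the two independent spectra $\{\theta_i\}$ of $A$ and $\{\lambda_j\}$ of $F$. By Lemma \ref{lem7} it suffices to solve the transformed equation $Y = \Theta Y \Lambda + \tilde{Q}$ with $\tilde{Q}=UQV^{-1}$, in which both $\Theta$ and $\Lambda$ are diagonal; the solution of \eqref{eq28} is then recovered through $X_o = U^{-1} Y_o V$. Non-defectiveness of both $A$ and $F$ is inherited from Lemma \ref{lem7}, so the two eigen decompositions are available throughout.

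First I would isolate the $j$-th column of $Y$. Since $\Lambda$ is diagonal, extracting column $j$ of $\Theta Y \Lambda$ simply scales that column by $\lambda_j$, so the single matrix equation decouples into $p$ independent linear systems
\begin{equation*}
(I_n - \lambda_j \Theta)\, Y_j = \tilde{Q}_j, \qquad j = 1,\dots,p,
\end{equation*}
which is the exact analogue of \eqref{eq9}, with the scalar $\sigma_i^*$ now played by $\lambda_j$. Because $I_n - \lambda_j \Theta$ is diagonal with entries $1-\lambda_j\theta_i$, I would invert it through the Neumann (Taylor) series $(I_n - \lambda_j\Theta)^{-1} = \sum_{l=0}^{\infty} \lambda_j^l \Theta^l$, exactly as in \eqref{eq10}, giving $Y_j = \sum_{l=0}^{\infty}\lambda_j^l \Theta^l \tilde{Q}_j$.

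Reassembling the columns then yields the compact series $Y = \sum_{l=0}^{\infty} \Theta^l \tilde{Q} \Lambda^l$, the counterpart of \eqref{eq12}. Reading off its $(i,j)$ entry, the diagonal factors pull outside the sum and leave $Y_{ij} = \tilde{Q}_{ij}\sum_{l=0}^{\infty}(\theta_i\lambda_j)^l = \tilde{Q}_{ij}(1-\theta_i\lambda_j)^{-1}$. Recognizing the scalar $(1-\theta_i\lambda_j)^{-1}$ as $N_{ij}$ identifies the whole array as the Hadamard product $Y = \tilde{Q}\circ N = (UQV^{-1})\circ N$, and substituting into $X_o = U^{-1} Y_o V$ delivers \eqref{eq34}.

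The step requiring care, just as in Theorem \ref{thm2}, is the summation of the geometric/Neumann series: it is valid only when $|\theta_i \lambda_j| < 1$ for every pair $(i,j)$, equivalently $\rho(A)\,\rho(F) < 1$ in terms of spectral radii. In the DLE the specialization $F = A^H$ collapsed this to the single stability requirement $|\sigma_i| < 1$, whereas here the two spectra are independent and the joint spectral condition must be imposed explicitly; this is the hypothesis under which \eqref{eq34} holds. Once it is in force, the termwise summation and the interchange of the sum with the column-stacking operation are routine, and the proof closes in the same manner as Theorem \ref{thm2}.
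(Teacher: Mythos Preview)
Your proposal is correct and follows precisely the route the paper intends: the paper omits the proof of Theorem~\ref{thm8} entirely, stating only that it is ``Similar to Theorem~\ref{thm2},'' and your argument is exactly that transcription---column-by-column decoupling via the diagonal $\Lambda$, inversion of $I_n-\lambda_j\Theta$ through the Neumann series, reassembly into $Y=\sum_l \Theta^l\tilde{Q}\Lambda^l$, and entrywise summation to obtain the Hadamard product $Y=(UQV^{-1})\circ N$. Your explicit statement of the convergence condition $|\theta_i\lambda_j|<1$ is in fact slightly sharper than the paper's own Remark, which writes $\le 1$.
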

 \begin{rem}
 Observe that we require $|\theta_i \lambda_j| \le 1, \forall i,j$.
 \end{rem}
 \section{Numerical Examples}\label{sec:num}
 As an illustration of the use of the closed-form solutions presented in this letter, we present here the results of
its application to an example  $A,Q$ matrices for the DLE.
\begin{exmp}
Consider $$A=\left[\begin{array}{ccccc}
3 & 9 & 5 & 1  \\
1 & 2 & 3 & 8  \\
4 & 6 & 6 & 6  \\
1 & 5 & 2 & 0  \\
\end{array}\right]\mbox{ and }Q=\left[\begin{array}{ccccc}
2 & 4 & 1 & 0 \\
4 & 1 & 0 & 2 \\
1 & 0 & 3 & 0 \\
0 & 2 & 0 & 1 \\
\end{array}\right].$$ The eigen decomposition of $A$ gives $$S= \begin{bmatrix}
                                                                -0.3096 & -0.7357  &  -0.5709 & -0.6480  \\
                                                                -0.6861 &  0.5214  &  -0.2285 &  1.4524  \\
                                                                -1.0883 & -0.3410  &   0.5985 &  1.3881  \\
                                                                -0.0664 & -0.8251  & - 0.0242 &  1.2688  \\
                                                              \end{bmatrix} $$
                                                              and
$\Sigma=\begin{bmatrix}
                                                                                       14.8470& 0 & 0 &0\\
                                                                                       0 & 1.4554 & 0 & 0\\ 0& 0& -0.1620&0\\0 & 0& 0& -5.1404                                                                                           \end{bmatrix}.$ \\\noindent
 Using Theorem \ref{thm2}, we obtain
  $$M=\begin{bmatrix}
                                                       -0.0046 & -0.0485  &  0.2936 &   0.0129    \\
                                                       -0.0485 & -0.8942  &  0.8092 &   0.1179    \\
                                                        0.2936 &  0.8092  &  1.0270 &   5.9875    \\
                                                        0.0129 &  0.1179  &  5.9875 &  -0.0393    \\
                                                     \end{bmatrix}
 $$ and the closed-form solution is given by $$X_o =\begin{bmatrix}
                        -11.2596  &  4.8462  &  7.1758  &  -6.0125  \\
                          4.8462  &  1.6896  & -4.3210  & -0.3908  \\
                          7.1758  & -4.3210  & -6.2960  & - 6.6110  \\
                         -6.0125  & -0.3908  &  6.6110  & -2.4587  \\
                        \end{bmatrix}.$$
 Back substituting the solution in the DLE verifies the correctness of the solution.
 \end{exmp}
\section{Conclusion}\label{sec:conc}
In this letter we have presented a closed-form expression for the discrete Lyapunov equation and its extension, the Stein equation based on the eigen decomposition. The closed-form expressions are of the same size as the coefficient matrices and as such are compact. The proposed closed form is limited by the (non) existence of closed form expressions for ED. However the proposed solution may help in simplifying design and analysis of control systems while the final solutions can be obtained by more numerically stable algorithms. Further, the computational complexity of the exact solution (when it exists, is cubic in the size of the coefficient matrices. This makes the complexity of the proposed exact solution of the same order as that of iterative solutions. The proposed approach has the potential to be applied to other matrix equations. The solutions of the dual continuous Lyapunov equations can be similarly obtained by following the approach of this paper or by using transformations as listed in literature. Perturbation analysis and development of numerically stable algorithms based on the above approach will be considered in future work.


\section*{Acknowledgment}
This work has been supported by Visvesvarya YFR fellowship to Prof. Mohammed Zafar Ali Khan and DST-INSPIRE YF fellowship to Dr. Aaqib Patel.  




%

%
%

\bibliography{IEEEabrv,autosam1}
\bibliographystyle{IEEEtran}

\end{document}